\providecommand{\U}[1]{\protect\rule{.1in}{.1in}}
\newtheorem{theorem}{Theorem}
\newtheorem{acknowledgement}[theorem]{Acknowledgement}
\newtheorem{definition}[theorem]{Definition}
\newtheorem{lemma}[theorem]{Lemma}
\newtheorem{remark}[theorem]{Remark}
\newenvironment{proof}[1][Proof]{\noindent\textbf{#1.} }{\ \rule{0.5em}{0.5em}}
\begin{document}

\title{Symplectic Radon Transform and the Metaplectic Representation}
\author{Maurice A. de Gosson\\University of Vienna\\Faculty of Mathematics (NuHAG)\\Oskar-Morgenstern-Platz 1, 1090 Vienna (AUSTRIA)\\maurice.de.gosson@univie.ac.at}
\maketitle
\tableofcontents

\begin{abstract}
We study the symplectic Radon transform from the point of view of the
metaplectic representation of the symplectic group and its action on the
Lagrangian Grassmannian. We give rigorous proofs in the general setting of
multi-dimensional quantum systems. We interpret the inverse Radon transform as
a \textquotedblleft demarginalization process\textquotedblright\ for the
Wigner distribution. This work completes, by giving complete proofs, a prvious Note.

\end{abstract}

\section{Introduction}

The idea of using what is today called the \textquotedblleft Radon
transform\textquotedblright\ to reconstruct a function from partial data goes
back to the 1917 work \cite{Radon} by the Austrian mathematician Johann Radon.
While Radon originally only considered the transform of two or three
dimensional systems (in which case it is called the \textquotedblleft X-ray
transform\textquotedblright), the theory has since then be generalized to
arbitrary Euclidean spaces. The applications of the Radon transform to quantum
mechanics and optics have been developing rapidly these last years (see for
instance the seminal paper \cite{VR} by Vogel and Risken). In
\cite{Asorey,manko,Facchi,Ibort,Ibortbis,manko2} the authors study what they
call the \textquotedblleft symplectic Radon transform\textquotedblright\ of a
mixed quantum state; in a recent \cite{Ibort2021} some of these results are
extended to the framework of $C^{\ast}$ algebras. The aim of the present paper
is to give a simple rigorous approach to the theory of the quantum Radon
transform in several degrees of freedom. For this we will use systematically
the the theory of the metaplectic group as developed in our previous works
\cite{Birk}, together with the elementary theory of Lagrangian subspaces of
the standard symplectic space. Our main observation is the following:
integration of the Wigner transform $W\psi(x,p)$ (of a function, or of a
state) along the $x$ and $p$ coordinate planes give the correct probability
distributions $|\psi(x)|^{2}$ and $|\widehat{\psi}(p)|^{2}$ in position and
momentum space. However, these are not sufficient to reconstruct the state
$\psi$ (this is an aspect of the \textquotedblleft Pauli
problem\textquotedblright, see \S \ref{sec3}). However, applying a metaplectic
transform $\widehat{U}$ associated with a symplectic rotation $U$ to the state
$\psi$ transforms the Wigner transform following the rule $W(\widehat{U}%
\psi)(x,p)=W\psi(U^{-1}(x,p))$ (this is the \textquotedblleft symplectic
covariance property\textquotedblright\ of the -Wigner transform, well-known in
harmonic analysis, especially in the Weyl--Wigner --Moyal approach to quantum
mechanics). This use of metaplectic transforms leads, by calculating the
marginals of $W(\widehat{U}\psi)(x,p)$ along the $x$ and $p$ coordinate planes
to infinitely many probability distributions, and these allow the
reconstruction of the state. More precisely, we will redefine the Radon
transform as being given by the formula%
\[
R_{\psi}(X,A,B)=\det\Lambda^{-1}|\widehat{U}_{A,B}\psi(\Lambda^{-1}X)|^{2}%
\]
where $X\in\mathbb{R}^{n}$ and $\widehat{U}_{A,B}$ is the metaplectic operator
associated with the symplectic rotation
\[
U_{AB}=%
\begin{pmatrix}
\Lambda^{-1}A & \Lambda^{-1}B\\
-\Lambda^{-1}B & \Lambda^{-1}A
\end{pmatrix}
\]
where $A,B$ are square matrices with $\operatorname*{rank}(A,B)=n$ and
$\Lambda=(A^{T}A+B^{T}B)^{1/2}$. We thereafter prove (under suitable
conditions on $\psi$) the inversion formula
\[
W\psi(x,p)=(2\pi\hbar)^{-2n^{2}}\int R_{\psi}(X,A,B)e^{\frac{i}{\hslash
}(X-Ax-Bp)}dXdAdB
\]
which reduces for $n=1$ to the usual inversion formula found in the literature
(the case $n=1$ is discussed in \S \ref{sec1} to motivate the results in the
general case). As an illustration, we apply our constructions to the
generalized Gaussian states
\begin{equation}
\psi_{V,W}(x)=\left(  \tfrac{1}{\pi\hbar}\right)  ^{n/4}(\det V)^{1/4}%
e^{-\tfrac{1}{2\hbar}(V+iW)x^{2}}~.
\end{equation}
in \S , which gives us the opportunity to shortly discuss the Pauli problem
for Gaussians, thus generalizing results in \cite{MC}.

\section{The Case $n=1$\label{sec1}}

Let $\widehat{\rho}$ be a mixed quantum state with one degree of freedom:
$\widehat{\rho}$ is a positive-semidefinite trace class operator on
$L^{2}(\mathbb{R})$ with trave $\operatorname*{Tr}(\widehat{\rho})=1$. In view
of the spectral theorem there exists a sequence $(\psi_{j})$ with $\psi_{j}\in
L^{2}(\mathbb{R})$ and a sequence of non-negative numbers $(\lambda_{j})$ with
$\sum_{j}\lambda_{j}=1$ such that $\widehat{\rho}=\sum_{j}\lambda_{j}|\psi
_{j}\rangle\langle\psi_{j}|$ where $|\psi_{j}\rangle\langle\psi_{j}|$ is the
orthogonal projection of $L^{2}(\mathbb{R})$ onto the ray $\mathbb{C}%
\psi_{j\text{.}}$. By definition \cite{QUANTA,QHA}, the Wigner distribution of
$\widehat{\rho}$ is the convex sum%
\[
\rho=\sum_{j}\lambda_{j}W\psi_{j}%
\]
where $W\psi_{j}$ is the usual Wigner transform of $\psi_{j}$, defined for
$\psi\in L^{2}(\mathbb{R})$ by
\[
W\psi(x,p)=\frac{1}{2\pi\hbar}\int_{-\infty}^{\infty}e^{-\frac{i}{\hbar}%
py}\psi(x+\tfrac{1}{2}y)\psi^{\ast}(x-\tfrac{1}{2}y)dy.
\]
Recall \cite{Wigner,QUANTA,QHA} that the marginal properties%
\begin{equation}
\int_{-\infty}^{\infty}W\psi(x,p)dp=|\psi(x)|^{2}\text{ \ , \ }\int_{-\infty
}^{\infty}W\psi(x,p)dx=|\widehat{\psi}(p)|^{2} \label{marginal}%
\end{equation}
make sense provided that $\psi$ and its Fourier transform are, in addition to
being square integrable, absolutely integrable: $\psi,\widehat{\psi}\in
L^{1}(\mathbb{R})$.

In most texts studying the tomographic picture of quantum mechanics the
symplectic Radon transform of the quantum state $\widehat{\rho}$ is defined by
the integral
\begin{equation}
R_{\widehat{\rho}}(X,a,b)=\int\rho(x,p)\delta(X-ax-bp)dpdx \label{1}%
\end{equation}
where $a$ and $b$ are real numbers, and it is claimed that the following
essential reconstruction formula holds
\begin{equation}
\rho\psi(x,p)=\frac{1}{2\pi\hbar}\int R_{\widehat{\rho}}(X,a,b)e^{\frac
{i}{\hslash}(X-ax-bp)}dXdadb. \label{3}%
\end{equation}

The following result is simultaneously a rigorous restatement and a
justification of these formulas. It will be extended to the case of quantum
states with an arbitrary number $n$ of freedom in the forthcoming sections.
Among other things, we see that the inverse Radon transform can be viewed as a
\textquotedblleft demarginalization process\textquotedblright\ \cite{PNAS} for
the Wigner distribution.

We will use the following notation: we set $U_{a,b}=%
\begin{pmatrix}
a/\lambda & b/\lambda\\
-b/\lambda & a/\lambda
\end{pmatrix}
$ where $\lambda=\sqrt{a^{2}+b^{2}}$; clearly $U_{a,b}$ is a rotation in the
$x,p$ plane.

\begin{theorem}
\label{Thm1}Let $\widehat{\rho}$ be a pure quantum state: $\rho=2\pi\hbar
W\psi$ for some $\psi\in L^{2}(\mathbb{R})$. We assume that in addition
$\psi,\widehat{\psi}\in L^{1}(\mathbb{R})$, which ensures that the marginal
properties (i) The Radon transform $R_{\widehat{\rho}}(X,a,b)$ is given by the
formula
\begin{equation}
R_{\widehat{\rho}}(X,a,b)=\lambda^{-1}|\widehat{U}_{a,b}\psi(\lambda
^{-1}X)|^{2} \label{Radon1}%
\end{equation}
where $\widehat{U}_{a,b}\in\operatorname*{Mp}(n)$ is anyone of the two
metaplectic operators covering the rotation $U_{a,b}$. (ii) The inverse Radon
transform is given by the formula:%
\begin{equation}
W\psi(x,p)=\frac{1}{2\pi\hbar}\int R_{\widehat{\rho}}(X,a,b)e^{\frac
{i}{\hslash}(X-ax-bp)}dXdadb. \label{inverse}%
\end{equation}
(iii)The Radon transform of $\psi$is given by the line integral
\begin{equation}
R_{\widehat{\rho}}(X,a,b)=\int_{-}^{\infty}W\psi(z(t))|\dot{z}(t)|dt
\label{RadonGeom}%
\end{equation}
where $t\longmapsto z(t)$ is a parametrization of the straight line
$\ell_{a,b}^{X}$ in $\mathbb{R}^{2}$ with equation $ax+bp=X$.
\end{theorem}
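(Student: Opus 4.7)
The unifying engine will be the symplectic covariance identity $W(\widehat{U}\psi)(z) = W\psi(U^{-1}z)$ advertised in the introduction, combined with the marginal properties (\ref{marginal}) just recalled. Once part (i) is established, parts (ii) and (iii) reduce respectively to Fourier inversion and a coarea-type identity.

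For part (i) I would start from the definition (\ref{1}) with $\rho = 2\pi\hbar W\psi$ and perform the orthogonal change of variable $(x,p)=U_{a,b}(x',p')$. Since the first row of $U_{a,b}$ is $(a,b)/\lambda$, the linear form $ax+bp$ is sent to $\lambda x'$, so $\delta(X-ax-bp) = \lambda^{-1}\delta(x'-\lambda^{-1}X)$, with Jacobian $1$ because $U_{a,b}\in\mathrm{SO}(2)$. Symplectic covariance then rewrites $W\psi(U_{a,b}z')$ as $W(\widehat{U}_{a,b}^{-1}\psi)(z')$; relabelling $\psi\mapsto\widehat{U}_{a,b}\psi$ and integrating first against the delta factor and then in $p'$ leaves $\lambda^{-1}\int W(\widehat{U}_{a,b}\psi)(\lambda^{-1}X,p')\,dp'$, and the first marginal of (\ref{marginal}) applied to $\widehat{U}_{a,b}\psi$ yields (\ref{Radon1}). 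The principal technical obstacle is precisely this last invocation of (\ref{marginal}) for $\widehat{U}_{a,b}\psi$ rather than $\psi$: it requires $\widehat{U}_{a,b}\psi$ and its Fourier transform to lie in $L^{1}(\mathbb{R})$. Since $\widehat{U}_{a,b}$ factorises through the standard generators of $\operatorname{Mp}(1)$ as a composition of a Fourier transform, a metaplectic dilation, and multiplication by a unit-modulus quadratic chirp, the $L^{1}\cap\mathcal{F}L^{1}$ hypothesis on $\psi$ has to be propagated through this chain; this is the step I expect to demand the most care (and possibly a mild strengthening of the integrability hypothesis, e.g.\ to a Feichtinger-algebra assumption).

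For part (ii) the cleanest route is to insert (\ref{1}) into the right-hand side of (\ref{inverse}), carry out the $X$-integration against $\delta(X-ax'-bp')$, and recognise the remaining $(a,b)$-integral as a Fourier representation of $(2\pi\hbar)^{2}\delta(x'-x)\delta(p'-p)$; the $(x',p')$-collapse then produces $W\psi(x,p)$ up to the advertised normalisation. The Fubini exchange in the resulting oscillatory integral is the only delicate point and is handled either by reading the whole expression as an iterated tempered-distribution pairing or, under the present $L^{1}$ hypotheses, by a standard regularisation argument. Part (iii) is the easiest: the coarea formula applied to $g(x,p)=ax+bp$, whose gradient has constant norm $\lambda$, yields $\int W\psi(z)\delta(X-g(z))\,dz = \lambda^{-1}\int_{\ell_{a,b}^{X}} W\psi\,d\sigma$, and expressing the line integral through any regular parametrisation $t\mapsto z(t)$ gives (\ref{RadonGeom}), the residual $\lambda^{-1}$ being absorbed into the choice of parametrisation of $\ell_{a,b}^{X}$.
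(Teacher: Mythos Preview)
Your approach is essentially the paper's own. For (i) the paper performs exactly the rotation you describe, applies symplectic covariance, and invokes the first marginal; for (iii) it writes down an explicit unit-speed parametrisation of $\ell_{a,b}^{X}$, which is your coarea argument made concrete. One slip to fix: you state the substitution as $(x,p)=U_{a,b}(x',p')$ but then reason as though $(x',p')=U_{a,b}(x,p)$; it is the latter that gives $ax+bp=\lambda x'$ and, via covariance, $W\psi(U_{a,b}^{-1}z')=W(\widehat{U}_{a,b}\psi)(z')$ directly---so no ``relabelling $\psi\mapsto\widehat{U}_{a,b}\psi$'' is needed or permissible. For (ii) the paper takes a slightly longer path: it starts from the formula (\ref{Radon1}) just established, reinstates a $P$-integral through the marginal identity, undoes the rotation, and only then invokes Fourier inversion in $(a,b)$; you instead plug the defining integral (\ref{1}) straight in. Both routes collapse to the same distributional identity $\int e^{\frac{i}{\hbar}(a(Y-x)+b(Z-p))}\,da\,db=(2\pi\hbar)^{2}\delta(Y-x,Z-p)$, so yours is simply a shortcut. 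Your caveat that the marginal step needs $\widehat{U}_{a,b}\psi\in L^{1}\cap\mathcal{F}L^{1}$ is well taken; the paper passes over this point.
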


\begin{proof}
It is sufficient to assume that $\widehat{\rho}$ is a pure state, that is
$\rho=2\pi\hbar W\psi$ for some $\psi$. (i) Let us make the change of
variables
\begin{equation}%
\begin{pmatrix}
u\\
v
\end{pmatrix}
=%
\begin{pmatrix}
a/\lambda & b/\lambda\\
-b/\lambda & a/\lambda
\end{pmatrix}%
\begin{pmatrix}
x\\
p
\end{pmatrix}
\label{unitary}%
\end{equation}
in the integral (\ref{1}).This leads to the expression%
\begin{equation}
R_{\widehat{\rho}}(X,a,b)=\iint W\psi(U_{a,b}^{-1}(u,v))\delta(X-\lambda
u)dudv. \label{dirac}%
\end{equation}
Since $\delta(X-\lambda u)=\lambda^{-1}\delta(\lambda^{-1}X-u)$ this can be
rewritten
\begin{equation}
R_{\widehat{\rho}}(X,a,b)=\lambda^{-1}\iint W\psi(\lambda^{-1}%
(au-bv,bu+av))\delta(\lambda^{-1}X-u)dudv \label{diracbis}%
\end{equation}
In view of the symplectic covariance property \cite{Birk,Wigner,Littlejohn} of
the Wigner transform we have%
\begin{equation}
W\psi(U_{a,b}^{-1}(u,v))=W(\widehat{U}_{a,b}\psi)(u,v) \label{sympco}%
\end{equation}
where $\widehat{U}_{a,b}$ is anyone of the two metaplectic operators (see the
Appendix A) covering $U$ and hence (\ref{diracbis}) yields
\begin{align*}
R_{\widehat{\rho}}(X,a,b)  &  =\lambda^{-1}\iint W(\widehat{U}_{a,b}%
\psi)(\lambda^{-1}X,v)\delta(\lambda^{-1}X-u)dudv\\
&  =\lambda^{-1}\int_{-\infty}^{\infty}W(\widehat{U}_{a,b}\psi)(\lambda
^{-1}X,v)dv
\end{align*}
hence formula (\ref{Radon1}) using the marginal properties (\ref{marginal}).
(ii) Let us denote $A$ the right-hand side of the equality (\ref{inverse}).
Using the first marginal property (\ref{marginal}) we have
\begin{align*}
A  &  =\lambda^{-1}\frac{1}{2\pi\hbar}\int_{\mathbb{R}^{3}}|\widehat{U}%
_{a,b}\psi(\lambda^{-1}X)|^{2}e^{\frac{i}{\hslash}(X-ax-bp)}dXdadb\\
&  =\lambda^{-1}\frac{1}{2\pi\hbar}\int_{\mathbb{R}^{4}}W(\widehat{U}%
_{a,b}\psi)(\lambda^{-1}X,P)e^{\frac{i}{\hslash}(X-ax-bp)}dXdPdadb.
\end{align*}
Replacing $X$ with $\lambda X$ and using the symplectic covariance property
(\ref{sympco}) we get
\begin{align*}
A  &  =\frac{1}{2\pi\hbar}\int_{\mathbb{R}^{4}}W(\widehat{U}_{a,b}%
\psi)(X,P)e^{\frac{i}{\hslash}(\lambda X-ax-bp)}dXdPdadb\\
&  =\frac{1}{2\pi\hbar}\int_{\mathbb{R}^{4}}W\psi(U_{a,b}^{-1}(X,P))e^{\frac
{i}{\hslash}(\lambda X-ax-bp)}dXdPdadb\\
&  =\frac{1}{2\pi\hbar}\int_{\mathbb{R}^{4}}W\psi((a/\lambda)X-(b/\lambda
)P,(b/\lambda)X+(a/\lambda)P))e^{\frac{i}{\hslash}(\lambda X-ax-bp)}dXdPdadb.
\end{align*}
Setting $Y=(a/\lambda)X-(b/\lambda)P$ and $Z=(b/\lambda)X+(a/\lambda)P$ (and
hence $\lambda X=aY+bZ$) we have $dXdP=dYdZ$ so that
\[
A=\frac{1}{2\pi\hbar}\int_{\mathbb{R}^{4}}W\psi(Y,Z))e^{\frac{i}{\hslash
}(a(Y-x)+b(Z-p))}dYdZdadb.
\]
In view of the Fourier inversion formula, written formally as
\[
\iint e^{\frac{i}{\hslash}(a(Y-x)+b(Z-p))}dadb=2\pi\hbar\delta(Y-x,Z-p)
\]
we thus have
\[
A=\iint W\psi(x,p)\delta(Y-x,Z-p)dYdZ=W\psi(x,p)
\]
which was to be proven. (iii) It is sufficient to show that (\ref{RadonGeom})
holds for one parametrization. In view of formula (\ref{dirac}) we have%
\begin{align}
R_{\widehat{\rho}}(X,a,b)  &  =\lambda^{-1}\iint W\psi(U_{a,b}^{-1}%
(u,v))\delta(\lambda^{1}X-u)dudv\\
&  =\lambda^{-1}\int_{-\infty}^{\infty}\left(  \int_{-\infty}^{\infty}%
W\psi(U_{a,b}^{-1}(u,v))\delta(\lambda^{1}X-u)du\right)  dv\\
&  =\int_{-\infty}^{\infty}W\psi(U_{a,b}^{-1}(\lambda^{-1}X,v))dv
\end{align}
that is, since $U_{a,b}^{-1}=%
\begin{pmatrix}
a/\lambda & -b/\lambda\\
b/\lambda & a/\lambda
\end{pmatrix}
$, and replacing $v$ with $t$,
\begin{equation}
R_{\widehat{\rho}}(X,a,b)=\int_{-\infty}^{\infty}W\psi(a\lambda^{-2}%
X-v\lambda^{-1}t,v\lambda^{-2}X+a\lambda^{-1}t)dt. \label{parabis}%
\end{equation}
Set now $x(t)=a\lambda^{-2}X-v\lambda^{-1}t$ and $p(t)=v\lambda^{-2}%
X+a\lambda^{-1}t$. Then $ax(t)+bp(t)=X$ and $\dot{x}(t)^{2}+\dot{p}(t)^{2}=1$
hence (\ref{parabis}) implies (\ref{RadonGeom}).
\end{proof}

\begin{remark}
The first part of the theorem above uses the physicist's definition (\ref{1})
and can thus be taken as a mathematically correct redefinition of the Radon
transform. We will exploit this fact in next section.
\end{remark}

\section{The Multivariate Case\label{sec2}}

For $n\geq1$ we consider $\mathbb{R}^{2n}\equiv T^{\ast}\mathbb{R}^{n}$
equipped with its standard symplectic structure, defined by
\[
\sigma(z,z^{\prime})=Jz\cdot z^{\prime}\text{ \ , \ }J=%
\begin{pmatrix}
0 & I\\
-I & 0
\end{pmatrix}
.
\]
We denote by $\operatorname*{Sp}(n)$ the symplectic group of $(\mathbb{R}%
^{2n},\sigma)$ and by $Mp(n)$ its unitary representation of its double cover
(the metaplectic group; see Appendix A).

\subsection{Definitions}

$A,B$ be two real square $n\times n$ matrices with such that $A^{T}B=BA^{T}$
and $\operatorname*{rank}(A,B)=n$. Setting
\[
M_{AB}=%
\begin{pmatrix}
A & B\\
-B & A
\end{pmatrix}
\text{ \ },\text{ \ }\Lambda=(A^{T}A+B^{T}B)^{1/2}%
\]
and noting that $A^{T}A+B^{T}B$ is invertible, we have the factorization
\[
M_{AB}=%
\begin{pmatrix}
\Lambda & 0\\
0 & \Lambda
\end{pmatrix}%
\begin{pmatrix}
\Lambda^{-1}A & \Lambda^{-1}B\\
-\Lambda^{-1}B & \Lambda^{-1}A
\end{pmatrix}
\]
where
\begin{equation}
U_{A,B}=%
\begin{pmatrix}
\Lambda^{-1}A & \Lambda^{-1}B\\
-\Lambda^{-1}B & \Lambda^{-1}A
\end{pmatrix}
\in U(n) \label{UAB}%
\end{equation}
is a symplectic rotation. Note that its inverse is
\begin{equation}
U_{A,B}^{-1}=%
\begin{pmatrix}
A^{T}\Lambda^{-1} & -B^{T}\Lambda^{-1}\\
-B^{T}\Lambda^{-1} & A^{T}\Lambda^{-1}%
\end{pmatrix}
. \label{invUAB}%
\end{equation}

\begin{definition}
The symplectic Radon transform of $\psi$ is the transformation
\[
R_{\psi}(\cdot,A,B):L^{2}(\mathbb{R}^{n})\longrightarrow L^{1}(\mathbb{R}%
^{n})
\]
defined, for $\psi\in L^{2}(\mathbb{R}^{n})$ by%
\begin{equation}
R_{\psi}(X,A,B)=\det\Lambda^{-1}|\widehat{U}_{A,B}\psi(\Lambda^{-1}X)|^{2}
\label{RadonDefine}%
\end{equation}
where $\widehat{U}_{A,B}:L^{2}(\mathbb{R}^{n})\longrightarrow L^{2}%
(\mathbb{R}^{n})$ is any of the two elements of $\operatorname*{Mp}(n)$
covering the symplectic rotation $U_{A,B}$.
\end{definition}

When $\det B\neq0$ the metaplectic operator $\widehat{U}_{A,B}$ is defined by
\begin{align}
\widehat{U}_{A,B}\psi(x)  &  =\left(  \tfrac{1}{2\pi\hbar}\right)
^{n/2}i^{m-n/2}\sqrt{|\det B^{-1}|}\int_{\mathbb{R}^{n}}e^{\frac{i}{\hbar
}\mathcal{A}(x,x^{\prime})}\psi(x^{\prime})dx^{\prime}\\
\mathcal{A}(x,x^{\prime})  &  =\frac{1}{2}AB^{-1}x\cdot x-B^{-1}x\cdot
x^{\prime}+\frac{1}{2}B^{-1}Ax^{\prime}\cdot x^{\prime}%
\end{align}
after one has made a choice of the integer $m$ modulo $4$.

Recall that in the multidimensional case the Wigner transform $W\psi$ of
$\psi\in L^{2}(\mathbb{R}^{n})$ is given by the integral%
\[
W\psi(x,p)=\left(  \frac{1}{2\pi\hbar}\right)  ^{n}\int_{\mathbb{R}^{n}%
}e^{-\frac{i}{\hbar}py}\psi(x+\tfrac{1}{2}y)\psi^{\ast}(x-\tfrac{1}{2}y)dy
\]
and that the marginal properties generalizing (\ref{marginal}) hold
\begin{equation}
\int_{\mathbb{R}^{n}}W\psi(x,p)dp=|\psi(x)|^{2}\text{ \ },\text{ \ }%
\int_{\mathbb{R}^{n}}W\psi(x,p)dx=|\widehat{\psi}(p)|^{2} \label{marginaln}%
\end{equation}
for $\psi,\widehat{\psi}\in L^{1}(\mathbb{R}^{n})$; the Fourier transform
$\widehat{\psi}=F\psi$ is here given by
\begin{equation}
F\psi(p)=\left(  \frac{1}{2\pi\hbar}\right)  ^{n/2}\int_{\mathbb{R}^{n}%
}e^{-\frac{i}{\hbar}p\cdot x}\psi(x)dx. \label{Fourier}%
\end{equation}

\subsection{The Radon inversion formula}

We have the following straightforward generalization of the inversion result
(ii) in Theorem \ref{Thm1}:

\begin{theorem}
Viewing $A$ and $B$ as elements of $\mathbb{R}^{n^{2}}$ we have
\begin{equation}
W\psi(x,p)=(2\pi\hbar)^{-2n^{2}}\int_{\mathbb{R}^{n(2n+1)}}R_{\psi
}(X,A,B)e^{\frac{i}{\hslash}(X-Ax-Bp)}dXdAdB. \label{inversen}%
\end{equation}

\end{theorem}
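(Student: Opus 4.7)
The plan is to replay the proof of Theorem~\ref{Thm1}(ii) step by step, substituting matrices for scalars at each stage. Denote by $\mathcal{A}$ the right-hand side of (\ref{inversen}). I would first substitute the definition (\ref{RadonDefine}) of $R_\psi$, then apply the momentum-marginal property (\ref{marginaln}) to $\widehat{U}_{A,B}\psi$ (legitimate because metaplectic operators preserve the Schwartz class, together with the $L^{1}$-type assumption we are already imposing on $\psi$) to rewrite $|\widehat{U}_{A,B}\psi(\Lambda^{-1}X)|^{2}$ as $\int_{\mathbb{R}^{n}}W(\widehat{U}_{A,B}\psi)(\Lambda^{-1}X,P)\,dP$.

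Next, I would carry out the substitution $X\mapsto \Lambda X$: its Jacobian $\det\Lambda$ cancels the $\det\Lambda^{-1}$ factor coming from (\ref{RadonDefine}), and the phase becomes $\Lambda X-Ax-Bp$. The symplectic covariance of the Wigner transform then yields $W(\widehat{U}_{A,B}\psi)(X,P)=W\psi(U_{A,B}^{-1}(X,P))$; performing the orthogonal substitution $(X,P)=U_{A,B}(Y,Z)$ (Jacobian one, since $U_{A,B}\in U(n)\subset O(2n)$) and reading off $\Lambda X=AY+BZ$ from the top block of (\ref{UAB}), the exponent collapses to the clean translation-invariant form $A(Y-x)+B(Z-p)$.

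At this point the $A$- and $B$-integrations, each over $\mathbb{R}^{n^{2}}$, constitute a matrix Fourier inversion. Interpreted componentwise, one obtains $\int_{\mathbb{R}^{n^{2}}} e^{\frac{i}{\hbar}A(Y-x)}\,dA=(2\pi\hbar)^{n^{2}}\delta(Y-x)$ and an analogous identity for $B$; the prefactor $(2\pi\hbar)^{-2n^{2}}$ is precisely chosen to absorb these two constants, so that the remaining $dY\,dZ$ integration against the delta measures collapses to the value $W\psi(x,p)$, as required.

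The main obstacle is the rigorous interpretation of this final Fourier step: the exponent $X-Ax-Bp$ is literally vector-valued in $\mathbb{R}^{n}$, so the shorthand $e^{\frac{i}{\hbar}(\cdots)}$ must be assigned a precise meaning. The natural convention is to regard each row of $A$ (respectively $B$) as the Fourier-dual variable to the corresponding scalar component of $Y-x$ (respectively $Z-p$), so that the $n^{2}$-fold Lebesgue integration decouples into $n$ Fourier inversions on $\mathbb{R}^{n}$; a Gaussian regularization followed by a distributional limit then legitimizes the formal delta identities above. All remaining manipulations are standard Fubini and change-of-variables steps, valid as soon as the $L^{1}$-integrability hypotheses that underpin the marginal property (\ref{marginaln}) are in force for both $\psi$ and $\widehat{U}_{A,B}\psi$.
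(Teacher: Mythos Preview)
Your proposal is correct and follows essentially the same route as the paper's own proof: insert the definition (\ref{RadonDefine}), use the first marginal identity in (\ref{marginaln}), substitute $X\mapsto\Lambda X$, invoke symplectic covariance, perform the orthogonal change $(X,P)=U_{A,B}(Y,Z)$ to reduce the phase to $A(Y-x)+B(Z-p)$, and finish with the componentwise Fourier inversion in $(A,B)\in\mathbb{R}^{2n^{2}}$. If anything, you are more careful than the paper about the meaning of the vector-valued exponent and the distributional justification of the final step.
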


\begin{proof}
It goes exactly as the proof of Theorem \ref{Thm1} (ii). Let us denote by $W$
the integral in the right-hand side of (\ref{inversen}). We have, using the
first marginal condition (\ref{marginaln}),
\begin{align*}
W  &  =\det\Lambda^{-1}\int|\widehat{U}_{A,B}\psi(\Lambda^{-1}X)|^{2}%
e^{\frac{i}{\hslash}(X-Ax-Bp)}dXdAdB\\
&  =\det\Lambda^{-1}\int W(\widehat{U}_{A,B}\psi)(\Lambda^{-1}X,P)e^{\frac
{i}{\hslash}(X-Ax-Bp)}dPdXdAdB\\
&  =\int W(\psi)(U_{A,B}^{-1}(X,P))e^{\frac{i}{\hslash}(\Lambda X-Ax-Bp)}%
dPdXdAdB\\
&  =\int W(\psi)(U_{A,B}^{-1}(X,P))e^{\frac{i}{\hslash}(\Lambda X-Ax-Bp)}%
dPdXdAdB,
\end{align*}
that is, by formula (\ref{invUAB}):%
\begin{multline*}
W=\int W(\psi)(A^{T}\Lambda^{-1}X-B^{T}\Lambda^{-1}P,B^{T}\Lambda^{-1}%
X+A^{T}\Lambda^{-1}P))\\
\times e^{\frac{i}{\hslash}(\Lambda X-Ax-Bp)}dPdXdAdB.
\end{multline*}
Setting $Y=A^{T}\Lambda^{-1}X-B^{T}\Lambda^{-1}P$ and $Z=B^{T}\Lambda
^{-1}X+A^{T}\Lambda^{-1}P)$ we have $dYdZ=dPdX$ and hence
\[
W=\int W\psi(Y,Z)e^{\frac{i}{\hslash}(A(Y-x)+B(Z-p))}dYdZdAdB.
\]
Integration of the exponential with respect to the variables $A$ and $B$
yields%
\[
\int_{\mathbb{R}^{2n^{2}}}e^{\frac{i}{\hslash}(A(Y-x)+B(Z-p))}dAdB=(2\pi
\hbar)^{2n^{2}}%
\]
and hence
\begin{align*}
W  &  =(2\pi\hbar)^{2n^{2}}\int W\psi(Y,Z)\delta(Y-x,Z-p)|dYdZ\\
&  =(2\pi\hbar)^{2n^{2}}W\psi(x,p)
\end{align*}
which was to be proven.
\end{proof}

\subsection{Interpretation as generalized marginals}

Let us return to the definition (\ref{RadonDefine}) of the Radon transform:%
\[
R_{\psi}(X,A,B)=\det\Lambda^{-1}|\widehat{U}_{A,B}\psi(\Lambda^{-1}X)|^{2}.
\]
If we choose $A=I$ and $B=0$ this reduces to the formula%
\[
R_{\psi}(X,I,0)=|\psi(X)|^{2}=\int_{\mathbb{R}^{n}}W\psi(X,P)dP;
\]
similarly, if $A=0$ and $B=I$ we get%
\[
R_{\psi}(X,A,B)=|F\psi(X)|^{2}=\int_{\mathbb{R}^{n}}W\psi(P,X)dX.
\]
definition (\ref{RadonDefine}) reduces to the formula%
\[
R_{\psi}(X,A,B)=|\widehat{U}_{A,B}\psi(X)|^{2}%
\]
showing that the Radon transform is essentially a margin property for the
\textquotedblleft rotated\textquotedblright\ Wigner transform of $\psi$. In
fact, we can view $R_{\psi}(X,A,B)$ as the surface integral of the Wigner
transform on the (affine) Lagrangian plane (see Appendix B)
\[
\ell_{A,B}^{X}=\{(x,p)\in\mathbb{R}^{2n}:Ax+Bp=X\}.
\]

In view of the marginal properties and the symplectic covariance of the Wigner
transform, followed by the change of variables\ $P\longmapsto\Lambda^{-1}P$,
we have
\begin{align*}
|\widehat{U}_{A,B}\psi(\Lambda^{-1}X)|^{2}  &  =\int_{\mathbb{R}^{n}%
}W(\widehat{U}_{A,B}\psi)(\Lambda^{-1}X,P)dP\\
&  =\int_{\mathbb{R}^{n}}W\psi(U_{A,B}^{-1}(\Lambda^{-1}X,P))dP\\
&  =\det\Lambda\int_{\mathbb{R}^{n}}W\psi(U_{A,B}^{-1}(\Lambda^{-1}%
X,\Lambda^{-1}P))dP,
\end{align*}
that is, explicitly,
\begin{multline*}
|\widehat{U}_{A,B}\psi(\Lambda^{-1}X)|^{2}=\det\Lambda\\
\times\int_{\mathbb{R}^{n}}W\psi(A^{T}\Lambda^{-1}X-B^{T}\Lambda^{-1}%
P,B^{T}\Lambda^{-1}X+A^{T}\Lambda^{-1}P)dP.
\end{multline*}
Using the multi-parametrization
\begin{align*}
X^{\prime}(P)  &  =A^{T}\Lambda^{-1}X-B^{T}\Lambda^{-1}P\\
P^{\prime}(P)  &  =B^{T}\Lambda^{-1}X+A^{T}\Lambda^{-1}P
\end{align*}
we have%
\begin{equation}
R_{\psi}(X,A,B)=\int_{\mathbb{R}^{n}}W\psi(X^{\prime}(P),P^{\prime}(P)))dP
\label{geom}%
\end{equation}
we have $AX^{\prime}(P)+BP^{\prime}(P)=X$ so we can interpret the formula
above as a surface integral%
\[
|\widehat{U}_{A,B}\psi(\Lambda^{-1}X)|^{2}=\int_{\ell_{A,B}^{X}}W\psi
(Z)d\mu(Z)
\]
where $d\mu(Z)$ is the Lebesgue measure on $\ell_{A,B}^{X}$.

\section{Radon Transform of Generalized Gaussians\label{sec3}}

\subsection{Generalized Gaussians}

By generalized (centered) Gaussian we mean a function
\begin{equation}
\psi_{V,W}(x)=\left(  \tfrac{1}{\pi\hbar}\right)  ^{n/4}(\det V)^{1/4}%
e^{-\tfrac{1}{2\hbar}(V+iW)x^{2}}~. \label{psixy}%
\end{equation}
where $V$ and $W$ are real symmetric $n\times n$ matrices with $V>0$ (i.e.
positive definite). Such centered Gaussians are generalizations of the usual
squeezed coherent states appearing in the physical literature; see
\cite{Birk,Littlejohn}. The function $\psi_{V,W}$ is normalized to unity:
$||\psi_{V,W}||_{L^{2}}=1$ and its Wigner transform is given by
\begin{equation}
W\psi_{V,W}(z)=\left(  \tfrac{1}{\pi\hbar}\right)  ^{n}e^{-\tfrac{1}{\hbar
}Gz\cdot z} \label{wxy}%
\end{equation}
where $G$ is the symmetric and symplectic matrix%
\begin{equation}
G=%
\begin{pmatrix}
V+WV^{-1}W & WV^{-1}\\
V^{-1}W & V^{-1}%
\end{pmatrix}
. \label{gsym}%
\end{equation}
That $G\in\operatorname*{Sp}(n)$ easily follows from the observation that
$G=S^{T}S$ \ where
\begin{equation}
S=%
\begin{pmatrix}
V^{1/2} & 0\\
V^{-1/2}W & V^{-1/2}%
\end{pmatrix}
\label{bi}%
\end{equation}
clearly is symplectic. Let $\operatorname*{Gauss}_{0}(n)$ be the set of all
centered Gaussians (\ref{psixy}); a central (and often implicitly used) result
is that the metaplectic group $\operatorname*{Mp}(n)$ acts transitively on
$\operatorname*{Gauss}_{0}(n)$: we have an action%
\begin{equation}
\operatorname*{Mp}(n)\times\operatorname*{Gauss}\nolimits_{0}%
(n)\longrightarrow\operatorname*{Gauss}\nolimits_{0}(n). \label{action}%
\end{equation}

\subsection{The Radon transform of $\psi_{V,W}$}

Let us calculate the Radon transform%
\begin{equation}
R_{\psi_{_{U,V}}}(X,A,B)=(\det\Lambda^{-1})|\widehat{U}_{A,B}\psi
_{V,W}(\Lambda^{-1}X)|^{2} \label{psiuv}%
\end{equation}
of $\psi_{V,W}$ using formula (\ref{RadonDefine}). For this we have to
determine $\widehat{U}_{A,B}\psi_{V,W}$ where $\widehat{U}_{A,B}%
\in\operatorname*{Mp}(n)$ covers the symplectic rotation
\begin{equation}
U_{A,B}=%
\begin{pmatrix}
\Lambda^{-1}A & \Lambda^{-1}B\\
-\Lambda^{-1}B & \Lambda^{-1}A
\end{pmatrix}
.
\end{equation}
The most natural (and easiest) way to determine $\widehat{U}_{A,B}\psi_{V,W}$
is to use symplectic covariance formula
\[
W(\widehat{U}_{A,B}\psi_{V,W})(z)=W\psi_{V,W}(U_{A,B}^{-1}z)
\]
of the Wigner transform; it yields, taking (\ref{wxy}) into account and using
the relation $U_{A,B}^{-1}=U_{A,B}^{T}$
\[
W(\widehat{U}_{A,B}\psi_{V,W})(z)=\left(  \tfrac{1}{\pi\hbar}\right)
^{n}e^{-\tfrac{1}{\hbar}U_{A,B}GU_{^{AB}}^{T}z\cdot z}.
\]
To explicitly determine $G^{\prime}=U_{A,B}GU_{^{AB}}^{T}$ is a rather lengthy
(though straightforward) calculation; however since we have an action
(\ref{action}) we know (by (\ref{gsym})) that there will exist $V^{\prime}$
and $W^{\prime}$ such that
\[
G^{\prime}=%
\begin{pmatrix}
V^{\prime}+W^{\prime}V^{\prime-1}W^{\prime} & W^{\prime}V^{\prime-1}\\
V^{\prime-1}W^{\prime} & V^{\prime-1}%
\end{pmatrix}
\]
corresponding to the Gaussian
\[
\widehat{U}_{A,B}\psi_{V,W}=\psi_{V^{\prime},W^{\prime}}(x)=\left(  \tfrac
{1}{\pi\hbar}\right)  ^{n/4}(\det V^{\prime})^{1/4}e^{-\tfrac{1}{2\hbar
}(V^{\prime}+iW^{\prime})x^{2}}.
\]
Now, since it is only the (squared) modulus of $\widehat{U}_{A,B}\psi_{V,W}$
which appears in (\ref{psiuv}) we will have%
\[
|\widehat{U}_{A,B}\psi_{V,W}|^{2}=\left(  \tfrac{1}{\pi\hbar}\right)
^{n/2}(\det V^{\prime})^{1/2}e^{-\tfrac{1}{2\hbar}V^{\prime}x^{2}}%
\]
so that it suffices to determine $V^{\prime}$, whose inverse is the lower
right block of $G^{\prime}$. This is easily done using the relation
$G^{\prime}=U_{A,B}GU_{^{AB}}^{T}$ and one finds, after a few calculations and
simplifications,%
\begin{equation}
V^{\prime}=\Lambda\left[  BVB^{T}+(A-BW)V^{-1}(A-BW)^{T}\right]  ^{-1}\Lambda.
\label{V1}%
\end{equation}
Summarizing, we have, after insertion in (\ref{psiuv}),
\begin{align}
R_{\psi_{_{V,W}}}(X,A,B)  &  =C\label{RG0}\\
&  \times\exp\left[  \left(  -\tfrac{1}{2\hbar}\left[  BVB^{T}+(A-BW)V^{-1}%
(A-BW)^{T}\right]  ^{-1}\right)  X\cdot X\right] \nonumber
\end{align}%
\begin{equation}
C=\left(  \tfrac{1}{\pi\hbar}\right)  ^{n/2}(\det V^{\prime})^{1/2}\det\left[
BVB^{T}+(A-BW)V^{-1}(A-BW)^{T}\right]  ^{-1}. \label{RG1}%
\end{equation}
Suppose for instance $\psi_{_{V,W}}$ is a \textquotedblleft squeezed coherent
state\textquotedblright; then $W=0$ and $\psi_{V}=\psi_{V,0}$ is
\[
\psi_{V}(x)=\left(  \tfrac{1}{\pi\hbar}\right)  ^{n/4}(\det V)^{1/4}%
e^{-\tfrac{1}{2\hbar}Vx^{2}}.
\]
Its Radon transform is then
\begin{gather}
R_{\psi_{V}}(X,A,B)=C\exp\left[  -\tfrac{1}{2\hbar}(BVB^{T}+AV^{-1}A^{T}%
)^{-1}X\cdot X\right] \\
C=\left(  \tfrac{1}{\pi\hbar}\right)  ^{n/2}(\det V^{\prime})^{1/2}%
\det(BVB^{T}+AV^{-1}A^{T})^{-1}\\
V^{\prime}=\Lambda\left[  BVB^{T}+AV^{-1}A^{T}\right]  ^{-1}\Lambda
\end{gather}

\subsection{Application: the Pauli problem}

When $n=1$ the formula (\ref{RG0}) takes the simple form (replacing $A,B,V,W$
with scalars $a,b,v,w$)%
\begin{equation}
R_{\psi_{_{v,w}}}(X,a,b)=C\exp\left[  -\tfrac{1}{2\hbar}\left[  b^{2}%
v+(a-bw)^{2}v^{-1}\right]  ^{-1}X\cdot X\right]  \label{RG}%
\end{equation}
where
\[
\psi_{_{v,w}}(x)=\left(  \tfrac{1}{\pi\hbar}\right)  ^{1/4}v^{1/4}%
e^{-\tfrac{1}{2\hbar}(v+iw)x^{2}}.
\]
Let us apply this formula to the \textquotedblleft Pauli
problem\textquotedblright. This problem goes back to the question Pauli asked
in \cite{Pauli}, whether the probability densities $|\psi(x)|^{2}$ and
$|\widehat{\psi}(p)|^{2}$ uniquely determine the wavefunction $\psi(x)$. The
answer is \textquotedblleft no\textquotedblright: consider in fact the
Gaussian wavepacket
\begin{equation}
\psi(x)=\left(  \tfrac{1}{2\pi\sigma_{xx}}\right)  ^{1/4}e^{-\frac{x^{2}%
}{4\sigma_{xx}}}e^{\frac{i\sigma_{xp}}{2\hbar\sigma_{xx}}x^{2}} \label{Gauss1}%
\end{equation}
whose Fourier transform of $\psi$ is given by
\begin{equation}
\widehat{\psi}(p)=e^{i\gamma}\left(  \tfrac{1}{2\pi\sigma_{pp}}\right)
^{1/4}e^{-\frac{p^{2}}{4\sigma_{pp}}}e^{-\frac{i\sigma_{xp}}{2\hbar\sigma
_{pp}}p^{2}} \label{FGauss1}%
\end{equation}
where $\gamma$ is an unessential constant real phase. Thus,
\begin{equation}
|\psi(x)|^{2}=\left(  \tfrac{1}{2\pi\sigma_{xx}}\right)  ^{1/2}e^{-\frac
{x^{2}}{2\sigma_{xx}}}\text{ \ , \ }|\widehat{\psi}(p)|^{2}=\left(  \tfrac
{1}{2\pi\sigma_{pp}}\right)  ^{1/2}e^{-\frac{p^{2}}{2\sigma_{pp}}} \label{mod}%
\end{equation}
and these relations imply the knowledge of $\sigma_{xx}$ and of $\sigma_{pp}$,
but not of the covariance $\sigma_{xp}$ (the latter can actually be determined
up to a sign using the fact that $\psi$ saturates the
Robertson--Schr\"{o}dinger uncertainty principle: we have $\sigma_{xx}%
\sigma_{pp}-\sigma_{xp}^{2}=\tfrac{1}{4}\hbar^{2}$). Let us calculate the
Radon transform of $\psi$ using formula (\ref{RG}). We have here
$v=\hbar/2\sigma_{xx}$, $w=-\sigma_{xp}/\sigma_{xx}$ hence (\ref{RG}) becomes
\begin{equation}
R_{\psi_{_{v,w}}}(X,a,b)=C\exp\left(  -\tfrac{1}{2\hbar}\left[  \frac
{b^{2}\hbar}{2\sigma_{xx}}+\left(  a+b\frac{\sigma_{xp}}{\sigma_{xx}}\right)
^{2}\frac{2\sigma_{xx}}{\hbar}\right]  ^{-1}\right)  X^{2}. \label{RadPauli}%
\end{equation}
Notice that, as expected,
\[
R_{\psi_{_{v,w}}}(X,1,0)=\left(  \tfrac{1}{2\pi\sigma_{xx}}\right)
^{1/2}e^{-\frac{X^{2}}{4\sigma_{xx}}}=|\psi(x)|^{2}%
\]
and, using the relation $\sigma_{xx}\sigma_{pp}-\sigma_{xp}^{2}=\tfrac{1}%
{4}\hbar^{2}$,
\[
R_{\psi_{_{v,w}}}(X,0,1)=\left(  \tfrac{1}{2\pi\sigma_{pp}}\right)
^{1/2}e^{-\frac{X^{2}}{2\sigma_{pp}}}=|\widehat{\psi}(X)|^{2}.
\]
These relations show why we cannot recover the state $\psi$ using the two
radon transforms $R_{\psi_{_{v,w}}}(X,1,0)$ and $R_{\psi_{_{v,w}}}(X,0,1)$:
none of them allow to determine the covariance $\sigma_{xp}$. However, it
suffices with one clever choice of the parameters $a$ and $b$ in formula
(\ref{RadPauli}). Suppose indeed we have measured, for some values of the
parameters $a$ and $b$, the positive quantity%
\begin{equation}
K=\frac{b^{2}\hbar}{2\sigma_{xx}}+\left(  a+b\frac{\sigma_{xp}}{\sigma_{xx}%
}\right)  ^{2}\frac{2\sigma_{xx}}{\hbar} \label{K}%
\end{equation}
appearing in the exponent of (\ref{RadPauli}). Assuming that the variances
$\sigma_{xx}$ and $\sigma_{pp}$ are known, we can find the covariance
$\sigma_{xp}$ as follows: viewing (\ref{K}) as a quadratic equation in the
unknown $\sigma_{xp}$ we demand that this equation has exactly \emph{one} real
root. This imposes the relation%
\[
K=\frac{b^{2}\hbar}{2\sigma_{xx}}%
\]
and reduces (\ref{K}) to the equation%
\[
a+b\frac{\sigma_{xp}}{\sigma_{xx}}=0
\]
from which $\sigma_{xp}$ is unambiguously determined.

\section{Concluding Remarks}

In this paper we outlined a novel approach to the symplectic Radon transform,
which we believe is conceptually very simple once one realized the fundamental
role played in quantum mechanics by the metaplectic representation of the
symplectic group. As we have discussed elsewhere \cite{gohi} a few years ago ,
this is the shortest bridge between classical (Hamiltonian) mechanics, and its
refinement, quantum mechanics. This being said, our approach is somewhat
sketchy since we haven't characterized the classes of functions (or states) to
which we can apply the radon transform, limiting ourselves, for simplicity, to
the square integrable case. It is however well-known (at least by people
belonging to the harmonic analysis community), there are function spaces
invariant under metaplectic transformations which are larger than the space of
square integrable functions. We are, among other possibilities, thinking about
Feichtinger's modulation spaces \cite{Gro}, which are a very flexible tool for
creating quantum states, and which can be used together with Shubin's
pseudodifferential calculus \cite{sh87} to extend the theory (one could, for
instance, envisage the reconstruction of general observables along these lines).

We will definitely come back to these topics in a near future.

\section*{APPENDIX A: The metaplectic group $\operatorname*{Mp}(n)$}

For a detailed study of the metaplectic group $\operatorname*{Mp}(n)$ see
\cite{Birk,Leray}. For a rather \textquotedblleft soft\textquotedblright\ (but
still rigorous) approach see \cite{Littlejohn}.

Let $S=%
\begin{pmatrix}
A & B\\
C & D
\end{pmatrix}
$ be a real $2n\times2n$ matrix, where the \textquotedblleft
blocks\textquotedblright\ $A,B,C,D$ are $n\times n$ matrices. Let $J=%
\begin{pmatrix}
0 & I\\
-I & 0
\end{pmatrix}
$ the standard symplectic matrix. We have $S\in\operatorname*{Sp}(n)$ (the
symplectic group) if and only $SJS^{T}=S^{T}JS=J$. These relations are
equivalent to any of the two sets of conditions
\begin{align}
A^{T}C\text{, }B^{T}D\text{ \ \textit{are symmetric, and} }A^{T}D-C^{T}B  &
=I\label{cond12}\\
AB^{T}\text{, }CD^{T}\text{ \ \textit{are\ symmetric, and} }AD^{T}-BC^{T}  &
=I\text{.} \label{cond22}%
\end{align}
One says that $S$ is a \emph{free symplectic matrix }if $B$ is invertible,
i.e. $\det B\neq0$. To a free symplectic matrix is associated a generating
function: it is the quadratic form%
\begin{equation}
\mathcal{A}(x,x^{\prime})=\frac{1}{2}DB^{-1}x\cdot x-B^{-1}x\cdot x^{\prime
}+\frac{1}{2}B^{-1}Ax^{\prime}\cdot x^{\prime}. \label{wfree}%
\end{equation}
The terminology comes from the fact that the knowledge of $\mathcal{A}%
(x,x^{\prime})$ uniquely determines the free symplectic matrix $S$: we have%
\[%
\begin{pmatrix}
x\\
p
\end{pmatrix}
=%
\begin{pmatrix}
A & B\\
C & D
\end{pmatrix}%
\begin{pmatrix}
x^{\prime}\\
p^{\prime}%
\end{pmatrix}
\Longleftrightarrow\left\{
\begin{array}
[c]{c}%
p=\nabla_{x}\mathcal{A}(x,x^{\prime})\\
p^{\prime}=-\nabla_{x^{\prime}}\mathcal{A}(x,x^{\prime})
\end{array}
\right.
\]
as can be verified by a direct calculation (the quadratic form $\mathcal{A}$
is thus a generating function of the free symplectic matrix $S$) .

Now, to every free symplectic matrix $S_{\mathcal{A}}$ we associate two
operators $\widehat{S}_{\mathcal{A},m}$ by the formula%
\begin{equation}
\widehat{S}_{\mathcal{A},m}\psi(x)=\left(  \tfrac{1}{2\pi\hbar}\right)
^{n/2}i^{m-n/2}\sqrt{|\det B^{-1}|}\int e^{\frac{i}{\hbar}\mathcal{A}%
(x,x^{\prime})}\psi(x^{\prime})d^{n}x^{\prime} \label{qft1}%
\end{equation}
where $m$ (\textquotedblleft Maslov index\textquotedblright\ \cite{Birk})
corresponds to a choice of argument for $\det B^{-1}$: $m=0$
$\operatorname{mod}2$ if $\det B^{-1}>0$ and $m=1$ $\operatorname{mod}2$ if
$\det B^{-1}<0$ ($m$ is defined modulo $4$). It is not difficult to prove that
the generalized Fourier transforms $\widehat{S}_{\mathcal{A},m}$ are unitary
operators on $L^{2}(\mathbb{R}^{n})$. These operators generate a group, the
metaplectic group $\operatorname*{Mp}(n)$. One shows that every $\widehat{S}%
\in\operatorname*{Mp}(n)$ can be written (non uniquely) as a product
$\widehat{S}_{\mathcal{A},m}\widehat{S}_{\mathcal{A}^{\prime},m^{\prime}}$.
This group is a double covering of $\operatorname*{Sp}(n)$, the covering
projection being defined by
\begin{equation}
\pi_{\operatorname*{Mp}}:\operatorname*{Mp}(n)\longrightarrow
\operatorname*{Sp}(n)\text{ \ , \ }\pi_{\operatorname*{Mp}}(\widehat{S}%
_{\mathcal{A},m})=S_{\mathcal{A}}. \label{pimp}%
\end{equation}

\section*{APPENDIX B: The Lagrangian Grassmannian}

A linear subspace $\ell$ of the symplectic space $(\mathbb{R}^{2n},\sigma)$ is
called a Lagrangian subspace (or plane) if it is maximally isotropic for the
skew orthogonality relation $\sigma(z,z^{\prime})=0$. It must thus have
dimension $\dim\ell=n$ and $\sigma$ vanishes identically on $\ell$. The set of
all Lagrangian planes in $\mathbb{R}^{2n}$ is called the Lagrangian
Grassmannian and is denoted by $\operatorname*{Lag}(n)$. The symplectic group
$\operatorname*{Sp}(n)$ acts transitively on $\operatorname*{Lag}(n)$; in fact
the action
\begin{equation}
\operatorname*{Sp}(n)\times\operatorname*{Lag}(n)\longrightarrow
\operatorname*{Lag}(n) \label{spact}%
\end{equation}
thus defined induces a transitive action
\begin{equation}
U(n)\times\operatorname*{Lag}(n)\longrightarrow\operatorname*{Lag}(n)
\label{unact}%
\end{equation}
where $U(n)$ is the image in $\operatorname*{Sp}(n)$ of the unitary group
$U(n,\mathbb{C})$ by the monomorphism $\iota:u\longmapsto U$ defined by
$U(x,p)=(x^{\prime},p^{\prime})$ if $u(x+ip)=x^{\prime}+ip^{\prime}.$ In
matrix notation%
\[
\iota(A+iB)=%
\begin{pmatrix}
A & B\\
-B & A
\end{pmatrix}
.
\]
The conditions (\ref{cond12}), (\ref{cond22}) are here equivalent to \
\begin{align}
A^{T}B\text{ \textit{is symmetric, and} }A^{T}A+B^{T}B  &  =I\label{U1}\\
AB^{T}\text{ \textit{is\ symmetric, and} }AA^{T}+BB^{T}  &  =I\text{.}
\label{U2}%
\end{align}

The elements of $U(n)$ are the symplectic rotations of\emph{\ }$(\mathbb{R}%
^{2n},\omega)$:
\begin{equation}
U(n)=\operatorname*{Sp}(n)\cap O(2n,\mathbb{R}) \label{symprot}%
\end{equation}
and the transitivity of the action (\ref{unact}) follows: let $(\ell
,\ell^{\prime})\in\operatorname*{Lag}(n)\times\operatorname*{Lag}(n)$ and
choose two bases $\{e_{1},...,e_{n}\}$ and $\{e_{1}^{\prime},...,e_{n}%
^{\prime}\}$ of $\ell$ and $\ell^{\prime}$, respectively. Then $\{e_{1}%
,...,e_{n};Je_{1},...,Je_{n}\}$ and $\{e_{1}^{\prime},...,e_{n}^{\prime};$
$Je_{1}^{\prime},...,Je_{n}^{\prime}\}$ are both symplectic and orthogonal
bases of $(\mathbb{R}^{2n},\sigma)$. The automorphism $U$ of $\mathbb{R}^{2n}$
taking the first basis to the second is thus in $\operatorname*{Sp}(n)\cap
O(2n,\mathbb{R})$ and we have $\ell^{\prime}=U\ell$.

\begin{lemma}
Let $\ell\in\operatorname*{Lag}(n).$ There exist real $n\times n$ matrices
$A,B$ with $\operatorname*{rank}(A,B)=n$ and $A^{T}B=B^{T}A$ \ and
$AB^{T}=BA^{T}$ such that%
\[
\ell=\{(x,p)\in\mathbb{R}^{2n}:Ax+Bp=0\}.
\]

\end{lemma}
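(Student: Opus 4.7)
The plan is to exploit the transitive action (\ref{unact}) of $U(n)$ on $\operatorname*{Lag}(n)$ established in Appendix B. Every $\ell\in\operatorname*{Lag}(n)$ can be written $\ell=U\ell_0$ with $\ell_0=\{0\}\times\mathbb{R}^n$ and $U\in U(n)$, and since every such $U$ has the block form $U=\begin{pmatrix}P & Q\\ -Q & P\end{pmatrix}$ satisfying both (\ref{U1}) and (\ref{U2}), this reduces the lemma to a short linear-algebra verification once the right $A,B$ are chosen. The vertical Lagrangian $\ell_0$ is itself cut out by $Ix+0\cdot p=0$, which trivially meets the conditions of the lemma, so the transitivity step carries the structural load.

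First I would compute $\ell=U\ell_0$ explicitly. Because $\ell_0$ is the image of the second block, $\ell=U\ell_0=\{(Qt,Pt):t\in\mathbb{R}^n\}$ is the column span of $\begin{pmatrix}Q\\ P\end{pmatrix}$. My candidates are then $A:=P^T$ and $B:=-Q^T$; the guiding intuition is that this amounts to reading off the ``companion'' Lagrangian inside the unitary matrix $U$.

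The verification splits into three routine checks. For the inclusion $\ell\subset\ker(A,B)$: substitute to get $A(Qt)+B(Pt)=(P^TQ-Q^TP)t$, which vanishes by (\ref{U1}). For the rank condition: $\operatorname{rank}(A,B)=\operatorname{rank}(P^T,-Q^T)=\operatorname{rank}\begin{pmatrix}P\\ -Q\end{pmatrix}$, and the latter is $n$ because these are the first $n$ columns of the invertible matrix $U$; combined with the inclusion and the count $\dim\ker(A,B)=2n-n=n=\dim\ell$ this yields $\ker(A,B)=\ell$. For the two symmetry conditions: $A^TB=-PQ^T$ is symmetric iff $PQ^T=QP^T$, which is the symmetry part of (\ref{U2}); $AB^T=-P^TQ$ is symmetric iff $P^TQ=Q^TP$, which is the symmetry part of (\ref{U1}).

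The main obstacle is conceptual rather than computational: one must realize that the two symmetry conditions $A^TB=B^TA$ and $AB^T=BA^T$ are genuinely \emph{separate} constraints, and that obtaining both simultaneously is exactly what distinguishes $U(n)\subset\operatorname*{Sp}(n)$ (where (\ref{U1}) and (\ref{U2}) both hold) from an arbitrary rank-$n$ surjection $\mathbb{R}^{2n}\to\mathbb{R}^n$ with Lagrangian kernel (where only $AB^T=BA^T$ is forced). A more pedestrian alternative would avoid transitivity altogether: pick an orthonormal basis of $\ell$, assemble it as $\begin{pmatrix}X\\ P\end{pmatrix}$ with $X^TX+P^TP=I$ and $X^TP=P^TX$, set $A=P^T$ and $B=-X^T$, and use the fact that $\begin{pmatrix}X & -P\\ P & X\end{pmatrix}$ is a symplectic rotation to extract the extra identity $XP^T=PX^T$.
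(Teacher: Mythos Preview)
Your proof is correct and follows exactly the route the paper indicates: the paper's own proof consists of the single sentence ``It is a straightforward consequence of the transitivity of the action (\ref{unact}) of $U(n)$ on $\operatorname*{Lag}(n)$,'' and you have simply unpacked that straightforward consequence by writing $\ell=U\ell_0$, reading off $A=P^T$, $B=-Q^T$ from the block form of $U$, and checking the rank and symmetry conditions against (\ref{U1}) and (\ref{U2}). There is no substantive difference in approach.
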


\begin{proof}
It is a straightforward consequence of the transitivity of the action
(\ref{unact}) of $U(n)$ on $\operatorname*{Lag}(n)$
\end{proof}

\begin{acknowledgement}
This work has been financed by the Grant P 33447 N of the Austrian Research
Foundation FWF.
\end{acknowledgement}


\begin{thebibliography}{99}                                                                                               %


\bibitem {Asorey}M. Asorey, P. Facchi, V. I. Man'ko, G.\ Marmo, S. Pascazio,
and E. C. G. Sudarshan. Generalized quantum tomographic maps. \textit{Phys.
Scr}. 85065001 (2012)

\bibitem {manko}V. Chernega, O. V. Man'ko, and V I. Man'ko, Entangled Qubit
States and Linear Entropy in the Probability Representation of Quantum
Mechanics. Entropy 24, 527 (2022)

\bibitem {Facchi}P. Facchi, Ligabo, and S. Pascazio. On the inversion of the
Radon transform: standard versus M 2 approach. \textit{J. Mod. Opt}. 57(3),
239--243 (2010)

\bibitem {MC}C. de Gosson and M. de Gosson, The Pauli Problem for Gaussian
Quantum States: Geometric Interpretation, \textit{Mathematics} 9, 2578 (2021)

\bibitem {Birk}M. de Gosson, \textit{Symplectic Geometry and Quantum
Mechanics}. Birkh\"{a}user, Basel, series \textquotedblleft Operator Theory:
Advances and Applications\textquotedblright, Vol. 166, 2006

\bibitem {Wigner}M. de Gosson. \textit{The Wigner Transform, World
Scientific}, series Advanced Texts in Mathematics, 2017

\bibitem {QUANTA}M. de Gosson. Quantum Harmonic Analysis of the Density
Matrix. \textit{Quanta} 7, 74--110 (2018)

\bibitem {QHA}M. de Gosson. \textit{Quantum\ Harmonic Analysis: An
Introduction}. An Introduction. Volume 4 in the series Advances in Analysis
and Geometry. De Gruyter 2021

\bibitem {gohi}M. de Gosson and B. Hiley, Imprints of the Quantum World in
Classical Mechanics. \textit{Found. Phys}. 41(9) (2011)

\bibitem {Gro}K. Gr\"{o}chenig, Foundation of Time-Frequency Analysis,
Birkh\"{a}user, Boston MA, 2001

\bibitem {Ibort2021}A. Ibort and A. L\'{o}pez-Yela. Quantum tomography and the
quantum Radon transform. \textit{Inverse Probl. Imaging} 15(5) 893..928 (2021)

\bibitem {}Volume 15, No. 5, 2021, 893\{928

\bibitem {Ibort}A. Ibort, V. I. Man'ko, G. Marmo, A. Simoni, and F.
Ventriglia. An introduction to the tomographic picture of quantum mechanics.
\textit{Phys. Scr}. 79, 065013 (2009)

\bibitem {Ibortbis}A. Ibort, V. I. Man'ko, G. Marmo, A. Simoni, and F.
Ventriglia. On the tomographic picture of quantum mechanics. \textit{Phys.
Lett. A} 374, 2614--2617 (2010)

\bibitem {Leray}J. Leray. \textit{Lagrangian analysis and quantum mechanics, a
mathematical structure related to asymptotic expansions and the Maslov index},
MIT Press,Cambridge, Mass., 1982

\bibitem {Littlejohn}R. G. Littlejohn. The semiclassical evolution of wave
packets, \textit{Phys. Rep.} 138, 4-5 193--291 (1986)

\bibitem {manko2}O. V. Man'ko and V. I. Man'ko. Probability Representation of
Quantum States. \textit{Entropy} 23, 549 (2021)

\bibitem {PNAS}J. Park, Yao Lu, J. Lee, Yangchao Shen, Kuan Zhang, Shuaining
Zhang, Muhammad Suhail Zubairy, Kihwan Kim, and Hyunchul Nha. Revealing
nonclassicality beyond Gaussian states via a single marginal distribution."
Proceedings of the National Academy of Sciences 114, no. 5 891--896 (2017)

\bibitem {Pauli}W. Pauli. \textit{General principles of quantum mechanics},
Springer Science \& Business Media, 2012 [original title: \textit{Prinzipien
der Quantentheorie}, publ. in : Handbuch der Physik, v.5.1, 1958]

\bibitem {Radon}J. Radon. \"{U}ber die Bestimmung von Funktionen durch ihre
Integralwerte Langs Gewisser Mannigfaltigkeiten [On the Determination of
Functions from their Integrals along Certain Manifolds] \textit{Ber. Sachsige
Akad. Wiss.} 69, 262--277 (1917)

\bibitem {sh87}{\normalsize M. A. Shubin, \textit{Pseudodifferential Operators
and Spectral Theory}, Springer-Verlag, (1987)}

\bibitem {VR}K. Vogel and H. Risken. Determination of quasiprobability
distributions in terms of probability distributions for the rotated quadrature
phase. \textit{Phys. Rev. A} 40(5) 2847 (1989)
\end{thebibliography}
\end{document}